\documentclass[a4paper,10pt]{article}
\usepackage[utf8]{inputenc}

\usepackage{amsmath,amsfonts,amssymb,amsthm,enumerate,color,graphicx}
\usepackage[capitalise]{cleveref}
\usepackage{fullpage}

\providecommand{\EE}{\mathbb{E}}
\providecommand{\Geom}{G}
\providecommand{\Exp}{E}
\providecommand{\bfz}{\mathbf{0}}
\providecommand{\Hypo}{\mathbf{H}}
\providecommand{\Gypo}{\mathbf{G}}
\providecommand{\bk}{\mathbf{k}}
\providecommand{\bh}{\mathbf{h}}
\providecommand{\Wone}{W_1}
\providecommand{\fr}[1]{\{#1\}}

\newtheorem{theorem}{Theorem}[section]
\newtheorem{lemma}[theorem]{Lemma}
\newtheorem{corollary}[theorem]{Corollary}
\theoremstyle{definition}
\newtheorem{definition}[theorem]{Definition}

\title{Asymptotic performance of the Grimmett--McDiarmid heuristic}
\author{Yuval Filmus}

\begin{document}

\maketitle

\begin{abstract}
Grimmett and McDiarmid suggested a simple heuristic for finding stable sets in random graphs. They showed that the heuristic finds a stable set of size $\sim \log_2 n$ (with high probability) on a $G(n,1/2)$ random graph. We determine the asymptotic distribution of the size of the stable set found by the algorithm.
\end{abstract}

\section{Introduction}

Grimmett and McDiarmid~\cite{GMD75} considered the problem of coloring $G(n,1/2)$ random graphs. As part of their solution, they suggested the following simple heuristic for finding a large stable set: scan the vertices in random order, adding to the stable set any vertex which is not adjacent to the vertices added so far. They showed that this heuristic constructs a stable set of size asymptotically $\log_2 n$ (with high probability), in contrast to the maximum stable set, whose size is asymptotically $2\log_2 n$ (with high probability).

Let us briefly indicate how to analyze the algorithm (for more details, consult any lecture notes on the subject). Denote by $N_k$ the number of remaining vertices not adjacent to the first $k$ vertices in the stable set constructed by the algorithm, or zero if the algorithm terminated before choosing $k$ vertices. A simple induction shows that $\EE[N_k] \leq n/2^k$, and so with high probability, the algorithm produces a stable set of size at most $\log_2 n + f(n)$, where $f(n)$ is \emph{any} function satisfying $f(n) \to \infty$.

For the lower bound, let us imagine that there are infinitely many vertices (this idea already appears in~\cite{GMD75}), let $i_0 = 0$, and let $i_k$ be the index of the $k$'th chosen vertex in the random order of the vertices (starting with~$1$). Then $i_{k+1} - i_k \sim \Geom(2^{-k})$ (geometric random variable with success probability $2^{-k}$), and the size of the clique is the maximal $k$ such that $i_k \leq n$. It is easy to calculate $\EE[i_k] = 2^k - 1$, from which it easily follows that with high probability, the algorithm produces a stable set of size at least $\log_2 n - f(n)$, where $f(n)$ is \emph{any} function satisfying $f(n) \to \infty$.

Let $\bk$ be the size of the stable set produced by the algorithm. The foregoing suggests that $\bk - \log_2 n$ approaches a limiting distribution, but there is a complication: $\bk$ is always an integer, while the fractional part of $\log_2 n$ varies. We will show that if we fix the fractional part $\fr{\log_2 n}$ then $\bk - \log_2 n$ indeed approaches a limit; and furthermore, the various limits stem from the same continuous distribution.

\begin{definition} \label{def:hypo}
 The random variable $\Hypo$ is given by the following sum of exponential distributions:
\[
 \Hypo = \sum_{i=1}^\infty \Exp(2^i).
\]
 (This defines a random variable due to Kolmogorov's two-series theorem.)
\end{definition}

\begin{theorem} \label{thm:main}
 For a given $n$, define
\[
 p_k = \Pr[\bk = k], \quad q_k = \Pr\left[\frac{n}{2^{k+1}} \leq \Hypo < \frac{n}{2^k}\right].
\]
 Then we have
\[
 \sum_{k=0}^\infty |p_k - q_k| = o(1). 
\]
\end{theorem}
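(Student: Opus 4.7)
\emph{Reduction via telescoping.} Following the infinite-vertex perspective of the introduction, I would realize $\bk$ as the largest $k$ with $i_k \le n$, where $i_k = j_0 + \cdots + j_{k-1}$ and $j_l \sim \Geom(2^{-l})$ are independent. Writing $r_k = \Pr[i_k \le n]$ and $t_k = \Pr[\Hypo < n/2^k]$, both sequences telescope: $p_k = r_k - r_{k+1}$ (using $i_{k+1} \ge i_k$) and $q_k = t_k - t_{k+1}$. With $\Delta_k := r_k - t_k$, this gives $|p_k - q_k| \le |\Delta_k| + |\Delta_{k+1}|$, and hence
\[
 \sum_k |p_k - q_k| \le 2 \sum_k |\Delta_k|.
\]
Since $\Hypo$ is continuous, $|\Delta_k| = |\Pr[i_k/2^k \le n/2^k] - \Pr[\Hypo \le n/2^k]|$ is at most the Kolmogorov distance $\delta_k$ between $i_k/2^k$ and $\Hypo$, so it is enough to control $\sum_k \delta_k$.

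\emph{Coupling $i_k/2^k$ with $\Hypo$.} The scaling $i_k/2^k \to \Hypo$ becomes transparent under the substitution $s = k-l$. For a fixed $k$, I would take iid uniforms $U_1, U_2, \ldots$ and set $E_s = -\log(U_s)/2^s$, so $\Hypo = \sum_{s \ge 1} E_s$, while $j_{k-s} = \lceil \log(U_s)/\log(1 - 2^{-(k-s)})\rceil$ for $1 \le s < k$ and $j_0 = 1$. Expanding $-1/\log(1-p) = 1/p - 1/2 + O(p)$ yields the pointwise comparison
\[
 j_{k-s} = 2^k E_s - \tfrac{1}{2}(-\log U_s) + O(1) + O\bigl(2^{s-k}(-\log U_s)\bigr).
\]
Summing over $s$, adding the tail $\sum_{s \ge k} E_s$, and applying a Chernoff-type bound to the iid sum $\sum_s (-\log U_s)$ gives $|i_k/2^k - \Hypo| = O((\log n)/2^k)$ with probability $1 - O(n^{-1})$. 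Combined with the density bound $f_\Hypo \le 2$ (inherited from $\Exp(2)$ via convolution), this yields $\delta_k = O((\log n)/2^k) + O(n^{-1})$.

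\emph{Summing in three ranges.} Let $k_0 = \log_2 n$ and take $M = M(n) \to \infty$ to be chosen later. (i) For $k \le k_0 - M$, Markov applied to $i_k$ and to $\Hypo$ forces $r_k, t_k \in [1 - O(2^{k-k_0}), 1]$, giving $\sum_{k \le k_0 - M} |\Delta_k| = O(2^{-M})$. (ii) For $k \ge k_0 + M$, independence and $\Pr[j_l \le n] \le n/2^l$ yield $r_k \le \prod_{l = \lceil k_0 \rceil}^{k-1} n/2^l = 2^{-\Omega((k-k_0)^2)}$, and the analogous bound $t_k \le 2^{-\Omega((k-k_0)^2)}$ follows from $\Hypo \ge \max_s E_s$, so this range contributes $2^{-\Omega(M^2)}$. (iii) For $k \in [k_0 - M, k_0 + M]$, the coupling gives $|\Delta_k| = O(2^M (\log n)/n)$, and summing $O(M)$ terms yields $O(M 2^M (\log n)/n)$. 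Taking $M = \log \log n$ makes each of the three contributions $o(1)$. The hardest step is the coupling estimate: the second-order term $-(-\log U_s)/2$ in the expansion of $1/\log(1-p)$ creates a systematic bias of order $k$ in $i_k - 2^k \Hypo$ (not $O(1)$ as a naive reading would suggest), and one needs sharp enough concentration around this bias to keep $\delta_k$ summable across the critical window.
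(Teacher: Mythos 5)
Your proof is correct, but it takes a genuinely different route from the paper, and the two are worth contrasting. Both arguments reduce the theorem to a bound on $|\Delta_k| = |\Pr[i_k \le n] - \Pr[\Hypo < n/2^k]|$, i.e.\ the Kolmogorov--Smirnov discrepancy between the rescaled geometric sum and $\Hypo$ evaluated at the single point $n/2^k$; the telescoping/triangle-inequality step $\sum_k|p_k - q_k| \le 2\sum_k|\Delta_k|$ you use is implicit in the paper's proof as well. Where you diverge is in how $|\Delta_k|$ is controlled. The paper bounds the Wasserstein distance $\Wone(\tfrac{n}{2^k}\Gypo,\Hypo) = O(k/2^k)$ term-by-term via subadditivity of $\Wone$ under independent sums, and then converts this to a KS bound $O(\sqrt{k/2^k})$ using the bounded-density lemma; the square root is the cost of the $\Wone\to$KS conversion. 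You instead build an explicit monotone coupling (shared uniforms $U_s$ through the inverse CDF), Taylor-expand $-1/\log(1-p) = 1/p - 1/2 + O(p)$, and use Chernoff/Gamma-tail bounds on $\sum_s(-\log U_s)$ and on the exponential tail $\sum_{s\ge k}E_s$ to get a high-probability pointwise bound $|i_k/2^k - \Hypo| = O((\log n)/2^k)$. Combined with the density bound (and you rightly observe that $f_\Hypo\le 2$ follows cheaply from convolving with $\Exp(2)$, a shortcut the paper does not take), this gives $|\Delta_k| = O((\log n)/2^k) + O(n^{-1})$ --- linear rather than square-root in $k/2^k$ --- so your per-$k$ estimate is actually sharper than the paper's. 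Your three-range split (Markov for small $k$, the coupling in the critical window, the super-exponential $2^{-\Omega((k-k_0)^2)}$ decay for large $k$) plays the role of the paper's truncation at $\ell = \tfrac23\log_2 n$, and both close the sum. One small remark: your choice $M = \log\log n$ is wasteful --- it makes range~(i) contribute $O(2^{-M}) = O(1/\log n)$, which dominates everything and is far worse than the paper's explicit $\tilde O(n^{-1/3})$; taking $M$ proportional to $\log n$ (say $M = \tfrac12\log_2 n$) would give $\tilde O(n^{-1/2})$, beating the paper's rate, which is the natural payoff of avoiding the square-root loss. Finally, the point you flag as "hardest" --- the $-\tfrac12(-\log U_s)$ second-order term creating an $\Theta(k)$ bias in $i_k - 2^k\Hypo$ --- is indeed the place where care is needed, but your Chernoff treatment does handle it: the bias is deterministic in expectation and the Gamma$(k-1,1)$ sum concentrates tightly enough that the $O((\log n)/2^k)$ envelope holds with probability $1 - O(n^{-1})$ throughout the critical window.
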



\paragraph{Preliminaries} The Wasserstein distance $\Wone(X,Y)$ between two random variables is the minimum of $\EE[|X-Y|]$ over all couplings of $X,Y$. This formula shows that $\Wone(X_1+X_2,Y_1+Y_2) \leq \Wone(X_1,Y_1) + \Wone(X_2,Y_2)$. The Wasserstein distance is also given by the explicit formula
\[
 \Wone(X,Y) = \int_{-\infty}^\infty |\Pr[X < t] - \Pr[Y < t]| \, dt.
\]

The Kolmogorov--Smirnov distance between $X$ and $Y$ is $\sup_t |\Pr[X < t] - |Pr[Y < t]|$. If $Y$ is a continuous random variable with density bounded by $C$, then the Kolmogorov--Smirnov distance between $X$ and $Y$ is bounded by $2\sqrt{C\Wone(X,Y)}$.

\section{Proof} \label{sec:proof}

Recall that $\bk$ is the size of the stable set produced by the Grimmett--McDiarmid algorithm. Grimmett and McDiarmid proved the following result, whose proof was outlined in the introduction.

\begin{lemma} \label{lem:formula-bk}
 \[
  \Pr[\bk < k] = \Pr[\Geom(1) + \Geom(1/2) + \cdots + \Geom(1/2^{k-1}) > n] = \Pr[\Geom(1/2) + \cdots + \Geom(1/2^{k-1}) \geq n].
 \]
\end{lemma}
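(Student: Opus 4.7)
The lemma is essentially a clean restatement of the analysis sketched in the introduction, so my plan is to make that sketch rigorous and then cash in the two equivalent reformulations.

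First, I would set up the infinite-vertex coupling. Let $v_1,v_2,\ldots$ be an infinite sequence of vertices and, for every pair, let $X_{uv}$ be an independent $\mathrm{Ber}(1/2)$ variable recording adjacency. The Grimmett--McDiarmid algorithm on $\{v_1,\ldots,v_n\}$ is the restriction of the obvious algorithm on the infinite sequence (process the vertices in order, add each one that is nonadjacent to the currently-chosen set). Let $i_1<i_2<\cdots$ be the indices chosen in the infinite process, with $i_0=0$. Then by construction $\bk=\max\{k:i_k\le n\}$, and consequently
\[
 \Pr[\bk<k] \;=\; \Pr[i_k>n].
\]

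Next, I would argue that $i_k=\sum_{j=0}^{k-1}(i_{j+1}-i_j)$ is a sum of independent geometric variables with the stated parameters. Conditioned on $i_j$ and on the identities of the $j$ already-chosen vertices, the gap $i_{j+1}-i_j$ depends only on the adjacency variables between $v_{i_j+1},v_{i_j+2},\ldots$ and the $j$ chosen vertices; these are fresh $\mathrm{Ber}(1/2)$'s, independent of everything determining the earlier gaps. At each subsequent index, the probability that the candidate vertex is compatible with all $j$ chosen ones is $2^{-j}$, so $i_{j+1}-i_j\sim \Geom(1/2^{j})$, independent of the previous gaps. Summing, $i_k\sim \Geom(1)+\Geom(1/2)+\cdots+\Geom(1/2^{k-1})$, which gives the first equality in the lemma.

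For the second equality, I would just observe that with the convention $\EE[\Geom(p)]=1/p$ (consistent with $\EE[i_k]=2^k-1$ quoted in the introduction), the variable $\Geom(1)$ equals $1$ almost surely. Hence
\[
 \Pr[\Geom(1)+\Geom(1/2)+\cdots+\Geom(1/2^{k-1})>n]
 =\Pr[\Geom(1/2)+\cdots+\Geom(1/2^{k-1})>n-1],
\]
and since the right-hand sum is integer-valued, $>n-1$ is the same as $\ge n$, yielding the claimed identity.

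The only step requiring any genuine care is the independence of the gaps; this is the standard "principle of deferred decisions" for the Grimmett--McDiarmid process, and once set up via the explicit coupling above it is routine. Everything else is bookkeeping.
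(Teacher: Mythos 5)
Your proof is correct and takes essentially the same approach the paper uses: the paper attributes this lemma to Grimmett--McDiarmid and only sketches the argument in the introduction (the infinite-vertex process with independent geometric gaps and $\bk=\max\{k:i_k\le n\}$), and your write-up simply makes that sketch rigorous via the deferred-decisions coupling. The treatment of the second equality, using $\Geom(1)\equiv 1$ together with integrality of the remaining sum to pass from $>n-1$ to $\ge n$, is exactly right.
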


Our main idea is to rewrite this formula as follows:
\begin{equation} \label{eq:formula-bk}
 \Pr[\bk < k] = \Pr\left[\frac{\Geom(1/2^{k-1})}{n} + \frac{\Geom(1/2^{k-2})}{n} + \cdots + \frac{\Geom(1/2)}{n} \geq 1\right].
\end{equation}
It is known that the distribution $\Geom(c/n)/n$ tends (in an appropriate sense) to an exponential random variable $\Exp(c)$. We will show this quantitatively, in terms of the Wasserstein metric~$\Wone$.

\begin{lemma} \label{lem:wasserstein-g-e}
 If $p \leq 1/2$ then
 \[
  \Wone(\Geom(p)/n, \Exp(pn)) = O\left(\frac{1}{n}\right).
 \]
\end{lemma}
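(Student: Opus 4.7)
The plan is to couple $\Geom(p)$ and $\Exp(pn)$ via the inverse-CDF method. Let $U$ be uniform on $(0,1)$, set $x = -\log U$ so that $x \sim \Exp(1)$, and let $a = -\log(1-p)$. Defining
\[
Y = \frac{x}{pn}, \qquad G = \left\lceil \frac{x}{a} \right\rceil,
\]
a direct check using the CDFs confirms that $Y \sim \Exp(pn)$ and $G \sim \Geom(p)$, so this is a legitimate coupling. By the definition of Wasserstein distance,
\[
\Wone(\Geom(p)/n,\, \Exp(pn)) \leq \EE[|G/n - Y|].
\]

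The next step is to bound $|G/n - Y|$ pointwise. Writing $\lceil x/a \rceil = x/a + \delta$ with $\delta \in [0,1)$ gives
\[
\left|\frac{G}{n} - Y\right| = \frac{1}{n}\left|\delta + x\left(\frac{1}{a} - \frac{1}{p}\right)\right| \leq \frac{1}{n}\left(1 + x \cdot \frac{a-p}{ap}\right),
\]
where the inequality uses $a \geq p$ so that $1/a - 1/p \leq 0$. Taking expectations and using $\EE[x] = 1$ yields $\EE[|G/n - Y|] \leq \frac{1}{n}\bigl(1 + (a-p)/(ap)\bigr)$.

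Finally, I would show that $(a-p)/(ap)$ is bounded by an absolute constant for $p \leq 1/2$. The Taylor expansion $a = -\log(1-p) = p + p^2/2 + p^3/3 + \cdots$ gives $a \geq p$ and $a - p \leq p^2/(2(1-p)) \leq p^2$, so $(a-p)/(ap) \leq p/a \leq 1$. Combining everything yields $\Wone \leq 2/n = O(1/n)$, as claimed.

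I do not expect any real obstacle. The one subtlety worth flagging is that the ceiling contributes a $1/n$ term that persists even as $p \to 0$, so no further cancellation is available; the hypothesis $p \leq 1/2$ enters only at the very end to keep the Taylor remainder $(a-p)/(ap)$ bounded.
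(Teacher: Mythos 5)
Your proof is correct and follows a genuinely different route from the paper's. The paper introduces the intermediate random variable $\lceil \Exp(pn)\,n\rceil$ (a geometric with parameter $1-e^{-p}$), notes it is within $1/n$ of $\Exp(pn)$ in Wasserstein distance, and then bounds the remaining distance via the one-dimensional CDF formula $\Wone(X,Y)=\int|\Pr[X<t]-\Pr[Y<t]|\,dt$, which reduces to summing $|(1-p)^t-e^{-pt}|$ over integers $t$. You instead exhibit a single explicit coupling, the inverse-CDF (monotone) coupling driven by a common $x\sim\Exp(1)$, and bound $\EE[|G/n-Y|]$ directly; the two error sources (the rounding, giving $\delta\in[0,1)$, and the mismatch $1/p$ versus $1/a$ in the scaling) appear transparently and each is trivially $O(1/n)$. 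I checked the details: $G=\lceil x/a\rceil$ does have $\Pr[G\ge t]=(1-p)^{t-1}$; the triangle inequality step and the bound $(a-p)/(ap)\le p/a\le 1$ for $p\le 1/2$ are correct. Your approach is arguably a bit cleaner since it avoids the tail-sum estimate $\sum_t p^2 t e^{-pt}=O(1)$ and makes the $O(1/n)$ rate visible at a glance; it also has the pleasant feature that the monotone coupling is the optimal one for $\Wone$ in one dimension, so the only slack is the triangle inequality. The paper's argument, on the other hand, does not require identifying the quantile functions and works entirely with tail probabilities. Both are short and elementary; this is a legitimate alternative.
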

\begin{proof}
 Let $X = \lceil \Exp(pn)n \rceil$. Then for integer $t$,
\[
 \Pr[X \geq t] = \Pr[\Exp(pn) > (t-1)/n] = e^{-p(t-1)}.
\]
 In contrast,
\[
 \Pr[\Geom(p) \geq t] = (1-p)^{t-1}.
\]
 By construction, $\Wone(X/n, \Exp(pn)) \leq 1/n$, and so
\begin{multline*}
 \Wone(\Geom(p)/n, \Exp(pn)) \leq \frac{1}{n} + \Wone(\Geom(p)/n, X/n) \leq \frac{1}{n} + \int_0^\infty |\Pr[\Geom(p)/n \geq s] - \Pr[X/n \geq s]| \, ds = \\
 \frac{1}{n} + \frac{1}{n} \sum_{r=0}^\infty |\Pr[\Geom(p) \geq r] - \Pr[X \geq r]| =
 \frac{1}{n} + \frac{1}{n} \sum_{t=1}^\infty |(1-p)^t - e^{-pt}|.
\end{multline*}
 Since $p \leq 1/2$, we have $-p-O(p^2) \leq \log(1-p) \leq -p$, and so
\[
 e^{-pt-O(p^2t)} \leq (1-p)^t \leq e^{-pt}.
\]
 Therefore
\[
 |(1-p)^t - e^{-pt}| = e^{-pt} (1 - e^{-O(p^2t)}) = O(p^2t e^{-pt}).
\]
 We can thus bound
\[
 \sum_{t=1}^\infty |(1-p)^t - e^{-pt}| \leq O(p^2) \sum_{t=1}^\infty \frac{t}{e^{pt}} = O\left(\frac{p^2e^p}{(e^p-1)^2}\right) = O(1). \qedhere
\]
\end{proof}

%

Since $\Wone$ is subadditive, we immediately conclude the following:

\begin{lemma} \label{lem:wasserstein}
 Let $\Gypo$ be the random variable appearing in~\eqref{eq:formula-bk}. Then
\[
 \Wone\left(\frac{n}{2^k} \Gypo,\Hypo\right) = O\left(\frac{k}{2^k}\right).
\]
\end{lemma}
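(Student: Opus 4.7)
The plan is to apply Lemma~\ref{lem:wasserstein-g-e} term-by-term, matching each summand of $\frac{n}{2^k}\Gypo$ to a summand of $\Hypo$, and then account separately for the Geometric-to-Exponential approximation error and the tail truncation error.

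First I would rewrite
\[
\frac{n}{2^k}\Gypo = \sum_{j=1}^{k-1} \frac{\Geom(1/2^j)}{2^k}.
\]
For each $j \in \{1,\ldots,k-1\}$, Lemma~\ref{lem:wasserstein-g-e} applies with $p = 1/2^j \leq 1/2$ and with the role of ``$n$'' played by $2^k$, giving $\Wone(\Geom(1/2^j)/2^k,\, \Exp(2^{k-j})) = O(2^{-k})$. Since the summands are independent and $\Wone$ is subadditive, summing these $k-1$ bounds yields
\[
\Wone\!\left(\frac{n}{2^k}\Gypo,\; \sum_{j=1}^{k-1} \Exp(2^{k-j})\right) = O(k/2^k).
\]
After reindexing by $i = k - j$, the second argument becomes the truncated sum $\sum_{i=1}^{k-1}\Exp(2^i)$.

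To close the gap to $\Hypo = \sum_{i=1}^\infty \Exp(2^i)$, I would use the obvious coupling that keeps the first $k-1$ exponential terms common and adds the independent tail: this gives
\[
\Wone\!\left(\sum_{i=1}^{k-1}\Exp(2^i),\; \Hypo\right) \leq \EE\!\left[\sum_{i=k}^\infty \Exp(2^i)\right] = \sum_{i=k}^\infty 2^{-i} = O(2^{-k}).
\]
A final application of the triangle inequality then combines the two bounds into $O(k/2^k) + O(2^{-k}) = O(k/2^k)$, as claimed.

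The only (mild) obstacle is bookkeeping: aligning the index $j$ in $\Geom(1/2^j)$ with the correct index $i = k - j$ in $\Exp(2^i)$, and verifying that the roles of $p$ and $n$ in Lemma~\ref{lem:wasserstein-g-e} are played by $1/2^j$ and $2^k$ respectively (so that $pn = 2^{k-j}$). No new analytic input is required beyond Lemma~\ref{lem:wasserstein-g-e} and the subadditivity of $\Wone$.
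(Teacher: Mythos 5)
Your proof is correct and follows essentially the same route as the paper: apply Lemma~\ref{lem:wasserstein-g-e} term-by-term, sum via subadditivity of $\Wone$, bound the infinite tail of $\Hypo$ by its expectation, and conclude by the triangle inequality. The only cosmetic difference is that you rescale before invoking Lemma~\ref{lem:wasserstein-g-e} (taking $2^k$ in place of $n$), whereas the paper invokes it first and then multiplies the resulting bound by $n/2^k$ using $\Wone(cX,cY)=c\,\Wone(X,Y)$.
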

\begin{proof}
 \cref{lem:wasserstein-g-e} shows that
\[
 \Wone(\Gypo, \Exp(n/2^{k-1}) + \cdots + \Exp(n/2)) = O\left(\frac{k}{n}\right),
\]
 which implies that
\[
 \Wone\left(\frac{n}{2^k} \Gypo, \Exp(2) + \cdots + \Exp(2^{k-1})\right) = O\left(\frac{k}{2^k}\right).
\]
 On the other hand,
\[
 \Wone\left(\sum_{\ell=k}^\infty \Exp(2^\ell),\bfz\right) = \EE\left[\sum_{\ell=k}^\infty \Exp(2^\ell)\right] = \frac{1}{2^{k-1}},
\]
 where $\bfz$ is the constant zero random variable.
 The lemma follows. 
\end{proof}

In order to convert this bound to a bound on the Kolmogorov--Smirnov distance, we need to know that $\Hypo$ is continuous and has a bounded density function.

\begin{lemma} \label{lem:bounded-density}
 The random variable $\Hypo$ is continuous, and has a bounded density function $f$:
\[
 f(x) = 2C^{-1} \sum_{i=1}^\infty (-1)^{i-1} e^{-2^ix} \prod_{r=1}^{i-1} \frac{2}{2^r-1}, \text{ where } C = \prod_{s=1}^\infty (1-2^{-s}) > 0.
\]
 (The constant $C$ is the limit of the probability that an $n \times n$ matrix over $\mathit{GF}(2)$ is regular.)
\end{lemma}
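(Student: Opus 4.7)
I would separate the two claims. The continuity and boundedness of the density come from a short convolution argument: write $\Hypo = \Exp(2) + \Hypo'$ where $\Hypo' = \sum_{i \geq 2}\Exp(2^i)$ is independent of the leading term. Since $\Exp(2)$ has density bounded by $2$, the convolution
\[
 f(x) = \int_0^x 2 e^{-2(x-y)} \, d\mu_{\Hypo'}(y)
\]
is continuous in $x$ and bounded by $2$. This already proves continuity and boundedness abstractly, independently of the explicit formula.

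For the explicit formula I would compute the density of the finite truncation $\Hypo_k = \sum_{i=1}^k \Exp(2^i)$ and pass to the limit $k \to \infty$. Since the rates $2^1, \dots, 2^k$ are distinct, the standard hypoexponential formula (derivable by partial fractions on the Laplace transform) gives
\[
 f_k(x) = \sum_{i=1}^k 2^i e^{-2^i x} \prod_{\substack{j=1\\j\ne i}}^k \frac{2^j}{2^j-2^i}.
\]
I would simplify the product by separating the factors with $j<i$ from those with $j>i$. Substituting $r = i - j$ converts the former into $(-1)^{i-1}\prod_{r=1}^{i-1}(2^r-1)^{-1}$, and substituting $s = j - i$ converts the latter into $\prod_{s=1}^{k-i}(1-2^{-s})^{-1}$. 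Writing $2^i = 2 \cdot 2^{i-1}$ and absorbing $2^{i-1}$ into the first product gives
\[
 f_k(x) = 2 \sum_{i=1}^k (-1)^{i-1} e^{-2^i x} \prod_{r=1}^{i-1}\frac{2}{2^r-1} \cdot \prod_{s=1}^{k-i}\frac{1}{1-2^{-s}}.
\]

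I would then pass to the limit $k \to \infty$. The trailing product $\prod_{s=1}^{k-i}(1-2^{-s})^{-1}$ converges to $C^{-1}$ uniformly in $i$ as $k \to \infty$ (the tail of $\prod_s (1-2^{-s})$ tends to $1$, and for $i$ close to $k$ the product only shrinks). The factor $\prod_{r=1}^{i-1}\frac{2}{2^r-1}$ decays doubly-exponentially in $i$, since $\frac{2}{2^r-1} \leq 2^{2-r}$ for $r \geq 2$, so the series on the right converges absolutely and uniformly for $x \geq 0$ to the function $f$ stated in the lemma. Combined with $\Hypo_k \to \Hypo$ almost surely (from Kolmogorov's two-series theorem) and the uniform boundedness of the $f_k$, dominated convergence applied to $\EE[\phi(\Hypo_k)] = \int \phi f_k$ for bounded continuous $\phi$ identifies $f$ as the density of $\Hypo$.

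\textbf{Main obstacle.} Nothing is genuinely deep; the care needed is in the sign and index bookkeeping during the partial-fraction simplification and in the uniform-in-$i$ passage to the limit. The non-negativity of $f$ is not manifest from the alternating series but is automatic since $f$ is a limit of genuine probability densities.
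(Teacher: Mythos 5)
Your proof takes essentially the same route as the paper: the hypoexponential density formula for $\Hypo_k$, the same $r=i-j$, $s=j-i$ regrouping of the partial-fraction coefficients, and a dominated-convergence passage to the limit $k\to\infty$ using the doubly-exponential decay of $\prod_{r=1}^{i-1}\frac{2}{2^r-1}$. Two remarks.

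First, your preliminary convolution argument is a clean addition that the paper does not use: writing $\Hypo = \Exp(2)+\Hypo'$ with $\Hypo'$ independent shows at once that $\Hypo$ has a density bounded by $2$ (indeed $f(x)=2e^{-2x}\int_{[0,x]}e^{2y}\,d\mu_{\Hypo'}(y)$), and continuity follows since $\Hypo'$ is atomless. This separates the soft qualitative claim from the explicit series, which is a small structural improvement. The paper instead obtains boundedness from the uniform bound $|f_\ell(x)|=O(e^{-2x})$ on the truncated densities.

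Second, one phrase in your limiting step is not literally correct: $\prod_{s=1}^{k-i}(1-2^{-s})^{-1}$ does \emph{not} converge to $C^{-1}$ uniformly in $i$ (for $i=k$ it is the empty product $1$, and $|1-C^{-1}|$ is a fixed positive constant). What is true, and what you evidently intend given your parenthetical, is that the product is bounded above by $C^{-1}$ for all $i,k$ and converges to $C^{-1}$ for each fixed $i$; combined with the superexponential decay of $\prod_{r=1}^{i-1}\frac{2}{2^r-1}$, this yields pointwise convergence of $f_k$ to $f$ together with a dominating integrable envelope, which is exactly what the dominated-convergence step needs. So the argument is sound; only the phrase ``uniformly in $i$'' should be replaced by ``bounded by $C^{-1}$ uniformly, and converging to $C^{-1}$ for each fixed $i$.''
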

\begin{proof}
 Let $\Hypo^{(\ell)} = \sum_{i=1}^\ell \Exp(2^i)$. It is well-known that the density of $\Hypo^{(\ell)}$ is
\[
 f_\ell(x) = \sum_{i=1}^\ell 2^i e^{-2^i x} K_{\ell,i}, \text{ where } K_{\ell,i} = \prod_{\substack{j=1\\j\neq i}}^{\ell} \frac{2^j}{2^j - 2^i}.
\]
 Note that
\[
 K_{\ell,i} = (-1)^{i-1} \prod_{j=1}^{i-1} \frac{1}{2^{i-j}-1} \times \prod_{j=i+1}^\ell \frac{1}{1-2^{i-j}} =
 (-1)^{i-1} \prod_{r=1}^{i-1} \frac{1}{2^r-1} \times \prod_{s=1}^{\ell-i} \frac{1}{1-2^{-s}}.
\]
 We can therefore write
\[
 f_\ell(x) = \sum_{i=1}^\ell 2e^{-2^i x} \times (-1)^{i-1} \prod_{r=1}^{i-1} \frac{2}{2^r-1} \times \prod_{s=1}^{\ell-i} \frac{1}{1-2^{-s}}.
\]
 This allows us to bound
\[
 |f_\ell(x)| \leq 2C^{-1}e^{-2x}\sum_{i=1}^\ell \prod_{r=1}^{i-1} \frac{2}{2^r-1},
\]
 where $C$ is the constant in the statement of the lemma.
 Bounding the sum by a geometric series, we conclude that $|f_\ell(x)| = O(e^{-2x})$, where the bound is independent of $\ell$. Applying dominated convergence, we obtain the formula in the statement of the lemma.
\end{proof}

Armed with this information, we can finally estimate $\Pr[\bk < k]$.

\begin{lemma} \label{lem:kolmogorov-smirnov}
 \[
  \Pr[\bk < k] = \Pr\left[\Hypo \geq \frac{n}{2^k}\right] \pm O\left(\sqrt{\frac{k}{2^k}}\right).
 \]
\end{lemma}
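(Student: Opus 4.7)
The plan is to chain together the preceding lemmas; no new ideas are required. Starting from~\eqref{eq:formula-bk} and rescaling the inequality inside the probability by $n/2^k$, we have
\[
 \Pr[\bk < k] = \Pr\left[\frac{n}{2^k}\Gypo \geq \frac{n}{2^k}\right].
\]
This matches the scaling appearing in \cref{lem:wasserstein}, which tells us that $(n/2^k)\Gypo$ is within Wasserstein distance $O(k/2^k)$ of $\Hypo$.

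Next, I would feed this into the Wasserstein-to-Kolmogorov inequality recorded in the Preliminaries. By \cref{lem:bounded-density}, $\Hypo$ has a density bounded by an absolute constant, so that inequality yields
\[
 \sup_t \left|\Pr\left[\frac{n}{2^k}\Gypo < t\right] - \Pr[\Hypo < t]\right| = O\left(\sqrt{\frac{k}{2^k}}\right).
\]
Specializing to $t = n/2^k$ and taking complements finishes the proof.

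There is no real obstacle; the only subtlety is that the Kolmogorov--Smirnov distance in the Preliminaries is defined in terms of strict inequality, whereas the statement uses ``$\geq$''. Since $\Hypo$ is continuous (again by \cref{lem:bounded-density}), we have $\Pr[\Hypo = n/2^k] = 0$, so $\Pr[\Hypo \geq n/2^k] = 1 - \Pr[\Hypo < n/2^k]$ and the passage between the two formulations costs nothing.
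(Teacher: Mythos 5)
Your proposal is correct and follows exactly the same route as the paper: rescale the event from~\eqref{eq:formula-bk}, invoke \cref{lem:wasserstein} for the Wasserstein bound, convert to a Kolmogorov--Smirnov bound using the bounded density from \cref{lem:bounded-density}, and specialize to $t = n/2^k$. Your extra remark about strict versus non-strict inequalities is a harmless (and correct) addition that the paper leaves implicit.
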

\begin{proof}
 Since $\Hypo$ has bounded density by \cref{lem:bounded-density}, we can bound the Kolmogorov--Smirnov distance between $\tfrac{n}{2^k} \Gypo$ and $\Hypo$ by $O(\sqrt{\Wone(\tfrac{n}{2^k} \Gypo, \Hypo)}) = O(\sqrt{k/2^k})$, using \cref{lem:wasserstein}. It follows that
\[
 \Pr[\bk < k] = \Pr\left[\frac{n}{2^k} \Gypo \geq \frac{n}{2^k}\right] = \Pr\left[\Hypo \geq \frac{n}{2^k}\right] \pm O\left(\sqrt{\frac{k}{2^k}}\right). \qedhere
\]
\end{proof}

\cref{thm:main} now easily follows:

\begin{proof}[Proof of \cref{thm:main}]
 \cref{lem:kolmogorov-smirnov} shows that for each $k$,
\[
 \Pr[\bk = k] = \Pr[\bk < k+1] - \Pr[\bk < k] =
 \Pr\left[\frac{n}{2^{k+1}} \leq \Hypo < \frac{n}{2^k}\right] \pm O\left(\sqrt{\frac{k}{2^k}}\right).
\]
 This implies that
\[
 \sum_{k=\ell}^\infty \left| \Pr[\bk = k] - \Pr\left[\frac{n}{2^{k+1}} \leq \Hypo < \frac{n}{2^k}\right] \right| = O\left(\sqrt{\frac{\ell}{2^\ell}}\right).
\]

 \cref{lem:formula-bk} shows that
\[
 \Pr[\bk < \ell] = \Pr[\Geom(1/2) + \cdots + \Geom(1/2^{\ell-1}) \geq n] \leq \frac{\EE[\Geom(1/2) + \cdots + \Geom(1/2^{\ell-1})]}{n} < \frac{2^\ell}{n},
\]
 and so choosing $\ell := \tfrac{2}{3} \log_2 n$, we have
\[
 \Pr[\bk < \ell] \leq \frac{1}{n^{1/3}}.
\]
 \cref{lem:kolmogorov-smirnov} shows that
\[
 \Pr\left[\Hypo \geq \frac{n}{2^\ell}\right] = O\left(\frac{\sqrt{\log n}}{n^{1/3}}\right),
\]
 and so
\[
 \sum_{k=0}^{\ell-1} \left| \Pr[\bk = k] - \Pr\left[\frac{n}{2^{k+1}} \leq \Hypo < \frac{n}{2^k}\right] \right| \leq
 \sum_{k=0}^{\ell-1} \left(\Pr[\bk = k] + \Pr\left[\frac{n}{2^{k+1}} \leq \Hypo < \frac{n}{2^k}\right] \right) = O\left(\frac{\sqrt{\log n}}{n^{1/3}}\right).
\]
 In total, we conclude that
\[
\sum_{k=0}^\infty \left| \Pr[\bk = k] - \Pr\left[\frac{n}{2^{k+1}} \leq \Hypo < \frac{n}{2^k}\right] \right| = O\left(\frac{\sqrt{\log n}}{n^{1/3}}\right). \qedhere
 \]
\end{proof}

We can also express \cref{thm:main} in terms of the variation distance between $\bk$ and an appropriate random variable.

Let $\theta = \fr{\log_2 n} = \log_2 n - \lfloor \log_2 n \rfloor$, and let $k = \lfloor \log_2 n \rfloor + c$. Then $n/2^k = 2^{\theta - c}$, and so the quantity $q_k$ in \cref{thm:main} is
\[
 \Pr[2^{-(c+1)} \leq 2^{-\theta} \Hypo < 2^{-c}] =
 \Pr[2^{-(c+1)} < 2^{-\theta} \Hypo \leq 2^{-c}] =
 \Pr[\lfloor \log_2 (1/\Hypo) + \theta \rfloor = c].
\]
Therefore we obtain the following corollary:

\begin{corollary} \label{cor:main}
 For a given $n$, let $\theta = \fr{\log_2 n}$ and define
\[
 \bh = \lfloor \log_2 (1/\Hypo) + \theta \rfloor.
\]
 The variation distance between $\bk$ and $\bh$ is at most $\tilde O(1/n^{1/3})$.
\end{corollary}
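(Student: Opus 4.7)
The plan is to deduce the corollary directly from \cref{thm:main} by reinterpreting its $\ell_1$ sum as (twice) a total variation distance. Recall that for integer-valued random variables $X, Y$, viewed as measures on $\mathbb{Z}$, one has
\[
d_{TV}(X, Y) = \tfrac12 \sum_{c \in \mathbb{Z}} |\Pr[X = c] - \Pr[Y = c]|,
\]
so the task reduces to identifying each atomic probability of $\bh$ with the appropriate $q_k$ from \cref{thm:main}.

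The key identification is the computation already carried out in the paragraph preceding the corollary: under the index change $k = \lfloor \log_2 n \rfloor + c$, one has $n/2^k = 2^{\theta - c}$, whence $q_k = \Pr[2^{-(c+1)} \leq 2^{-\theta}\Hypo < 2^{-c}]$. Because \cref{lem:bounded-density} guarantees $\Hypo$ is continuous, the open and closed endpoints can be swapped without changing the probability, turning the right-hand side into $\Pr[\bh = c]$ exactly. Hence $q_k = \Pr[\bh = c]$ whenever $k = \lfloor \log_2 n \rfloor + c$, with no approximation incurred at this step.

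Substituting this identity into \cref{thm:main} yields $\sum_{c \geq -\lfloor \log_2 n \rfloor} |\Pr[\bk = c + \lfloor \log_2 n \rfloor] - \Pr[\bh = c]| = O(\sqrt{\log n}/n^{1/3})$, which is twice the total variation distance between $\bh$ and $\bk$ aligned on the common lattice via the shift $c = k - \lfloor \log_2 n \rfloor$. To extend to the full sum over $c \in \mathbb{Z}$ one must add the tail $\sum_{c < -\lfloor \log_2 n \rfloor} \Pr[\bh = c] = \Pr[\Hypo > n]$; since $\EE[\Hypo] = \sum_{i \geq 1} 2^{-i} = 1$, Markov's inequality bounds this by $1/n$, which is absorbed by $\tilde O(1/n^{1/3})$. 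I do not anticipate any real obstacle; the whole argument amounts to routine bookkeeping on top of the identification $q_k = \Pr[\bh = c]$.
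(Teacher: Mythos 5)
Your proposal is correct and follows the same route as the paper: identify $q_k$ with $\Pr[\bh = c]$ via the index shift $k = \lfloor \log_2 n \rfloor + c$ (using continuity of $\Hypo$ to swap the endpoint inclusions), then read \cref{thm:main} as an $\ell_1$ bound on the aligned lattices. Your explicit treatment of the tail $c < -\lfloor \log_2 n \rfloor$ via $\EE[\Hypo]=1$ and Markov is a small but genuine piece of bookkeeping that the paper leaves tacit, and it is needed since \cref{thm:main} only sums over $k \geq 0$.
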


The random variable $\log_2 (1/\Hypo)$ has density
\[
 g(y) = (2 C^{-1} \ln 2) 2^{-y} \sum_{i=1}^\infty (-1)^{i-1} e^{-2^{i-y}} \prod_{r=1}^{i-1} \frac{2}{2^r-1},
\]
and is plotted in \cref{fig:density}.

\begin{figure}
 \centering
 \includegraphics[scale=0.5]{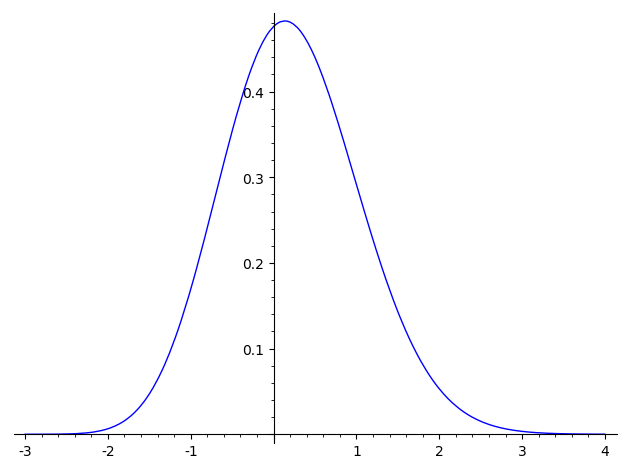}

 \caption{Density of $\log_2 (1/\Hypo)$}
 \label{fig:density}
\end{figure}

\section{Applications} \label{sec:formulas}

Integrating the formula given in \cref{lem:bounded-density}, we obtain the following estimate via \cref{lem:kolmogorov-smirnov}:
\[
 Pr[\bk = k] \approx C^{-1} \sum_{i=1}^\infty (-1)^{i-1} \left(e^{-n2^{i-k-1}} - e^{-n2^{i-k}} \right) \prod_{r=1}^{i-1} \frac{1}{2^r-1},
\]
where the error is $O(k/2^k)$. If $k = \log_2 n + c$, then this becomes
\[
 Pr[\bk = \log_2 n + c] \approx C^{-1} \sum_{i=1}^\infty (-1)^{i-1} \left(e^{-2^{i-c-1}} - e^{-2^{i-c}} \right) \prod_{r=1}^{i-1} \frac{1}{2^r-1}.
\]

Using this, we can calculate the limiting distribution of $\bk$, fixing $\fr{\log_2 n}$. For example, if $n$ is a power of~$2$ then we obtain the following limiting distribution:
\[
 \begin{array}{r|l}
  c & \lim \Pr[\bk = \log_2 n + c] \\\hline
  -4 & 0.000000389680708123307 \\
  -3 & 0.00116084271918975 \\
  -2 & 0.0610996920580558 \\
  -1 & 0.343335642221465 \\
  0 & 0.420730421531672 \\
  1 & 0.153255882765631 \\
  2 & 0.0194547690538043 \\
  3 & 0.000943671851018291 \\
  4 & 0.0000185343323798604 \\
  5 & 0.000000153237063593714
 \end{array}
\]

In this case, the expected deviation of $\bk$ from $\log_2 n$ is $-0.273947769982407$, and the standard deviation of $\bk$ is $0.763009254799132$.

\bibliographystyle{alpha}
\bibliography{biblio}

\end{document}